\newtheorem{lemma}{Lemma}
\theoremstyle{definition}
\newtheorem{definition}{Definition}
\newtheorem{remark}{Remark}
\title{Reversible Sessions Using Monitors\thanks{Research partly supported by EU COST Actions IC1201 and IC1405.}} 
\author{Claudio A. Mezzina
\institute{IMT School for Advanced Studies Lucca, Italy}
\and
Jorge A. P\'{e}rez
\institute{University of Groningen, The Netherlands}
}
\begin{document}
\maketitle

\begin{abstract}
Much research has studied foundations for correct  and reliable
\emph{com\-muni\-ca\-tion-centric systems}. A salient approach to correctness uses 
\emph{session types} to enforce structured communications; a recent approach to reliability uses 
\emph{reversible} actions as a way of reacting to unanticipated  events or failures.
This note develops a simple observation: 
the  machinery  required to define asynchronous semantics and monitoring 
can also support reversible protocols. 
We propose a process framework of session communication in which monitors support reversibility. 
A  key novelty in our approach are \emph{session types with present and past}, which allow us to streamline 
the semantics of reversible actions.
\end{abstract}

\section{Introduction}
Much research has studied foundations  
for reliable
\emph{com\-muni\-ca\-tion-centric} software systems.
Our interest is in process frameworks that, building on core calculi for concurrency, offer analysis techniques for message-passing programs. While early frameworks focused on (static) verification of protocol correctness, as enforced by 
properties such as 
safety, fidelity, and progress (deadlock-freedom)~(see, e.g.,~\cite{DBLP:conf/esop/HondaVK98,DBLP:conf/popl/HondaYC08,DBLP:journals/tcs/CairesV10}),  extensions with external mechanisms (such as, e.g., exceptions, interruptions, and compensations~\cite{DBLP:conf/tgc/CairesFV08,DBLP:journals/fmsd/DemangeonHHNY15,DBLP:journals/mscs/CapecchiGY16}, adaptation~\cite{DBLP:journals/scp/GiustoP15}, monitoring~\cite{LGP2016}, reversibility~\cite{DBLP:journals/jlp/TiezziY15}) have been proposed to enforce protocol correctness 
even in the presence of unanticipated events (say, failures or new requirements).

Comprehensive approaches to correctness and reliability, which address and enforce both kinds of requirements, seem indispensable in the principled design of communication-centric systems.
As these systems are increasingly built using heterogeneous services whose provenance/correctness cannot always be certified in advance, static validation techniques (such as type systems) fall short. Correctness must then be guaranteed by mechanisms for reliability,  which may inspect the (visible) behavior of interacting services and take action if they deviate from prescribed communication protocols.

We report on ongoing work aimed at uniform approaches to correct, reliable communicating systems. 
We address the interplay between \emph{session types} and models of \emph{reversible computation}: models of concurrency in which the usual \emph{forward} semantics is coupled with a \emph{backward} semantics that allows one to ``undo'' process actions~\cite{DanosK04}. We explore to what extent type information can streamline the reversible semantics for interacting processes. Our discovery is that known (run-time) mechanisms used to support asynchronous (queue-based) semantics and  monitoring can also support  reversible protocol actions.

A key technical device in formalizing reversible semantics are \emph{memories}: these are run-time constructs which enable one to revert actions. Memories are the bulk of a reversible model; their maintenance requires care, as demonstrated by Tiezzi and Yoshida~\cite{DBLP:journals/jlp/TiezziY15}, who adapt known reversible semantics~\cite{DanosK04,LaneseMS10} into the session typed setting. 
In this work, we explore a different approach: 
we use monitors as memories.
We investigate to what extent queue-based semantics for session types can support reversibility. The key idea is simple: we use the type-checking component of queue-based semantics (i.e., the fact that session types enable process reductions) to support reversible process actions. 
Our approach concerns directly the reduction semantics for sessions, so we illustrate it via approximate reduction rules, which omit unimportant notational details. 
Consider the reduction rule for session communication, enhanced with 
session types  
and message 
queues, in the style of~\cite{Kouzapas09,DBLP:conf/ecoop/HuKPYH10,DBLP:conf/forte/KouzapasYH11}: 
\begin{equation}
\out{s}\msg{v}.P \parallel 
\bar{s}\queue{\send U.S_1 \,\cdot\, \tilde{h}_1} 
\parallel
s(x).Q
\parallel
s\queue{\receive U.S_2 \,\cdot\, \tilde{h}_2} 
\longrightarrow 
P \parallel 
\bar{s}\queue{S_1 \,\cdot\, \tilde{h}_1} 
\parallel
Q
\parallel
s\queue{S_2 \,\cdot\, \tilde{h}_2,v} 
\label{eq:introrule1}
\end{equation}
In \eqref{eq:introrule1},  
processes 
$\out{s}\msg{v}.P$ and $s(x).Q$ denote output and input along 
 \emph{session endpoints}  $\bar{s}$ and $s$, respectively. 
Notice that $\bar{s}$ and $s$ are \emph{dual} endpoints.
 Given an endpoint $s$, 
 process
 $s\queue{S\,\cdot\, \tilde{h}}$ is a   \emph{monitor}, where $S$ and $\tilde{h}$ are the session type  and message queue for~$s$, respectively. 
In the approach of~\cite{Kouzapas09,DBLP:conf/ecoop/HuKPYH10,DBLP:conf/forte/KouzapasYH11},  
session types enable communication actions: a synchronization can only occur if the actions (in the processes) correspond to the intended protocols (in the monitor types). After synchronization, portions of both processes and monitor types are consumed. Our approach consists in keeping, rather than consuming, these monitor types. For this to work, we need to distinguish the part of the protocol that has been already executed (its past), from the protocol that still needs to execute (its present). 
We thus introduce session types with \emph{present and past}: the type $S \past  T$ says that actions abstracted by $S$ are past protocol actions, whereas actions in $T$ are present steps. We may refine \eqref{eq:introrule1} as follows:
\begin{equation}
\out{s}\msg{v}.P \parallel 
\bar{s}\queue{T \past \send U.S_1 \,\cdot\, \tilde{h}_1} 
\parallel
s(x).Q
\parallel
s\queue{T' \past\receive U.S_2 \,\cdot\, \tilde{h}_2} 
\fw 
P \parallel 
\bar{s}\queue{T.\send U \past S_1 \,\cdot\, \tilde{h}_1} 
\parallel
Q
\parallel
s\queue{T'.\receive U \past S_2 \,\cdot\, \tilde{h}_2,v} 
\label{eq:introrule2}
\end{equation}
This is a \emph{forward} reduction rule.
Monitors 
$\bar{s}\queue{T \past \send U.S_1 \,\cdot\, \tilde{h}_1}$ and $s\queue{T' \past\receive U.S_2 \,\cdot\, \tilde{h}_2}$ 
use type-checking to enable forward and backward computations; they may also implement asynchronous communication.
Observe that we use the cursor $\past$ to preserve 
output and input protocol actions  (noted $\send U$ and $\receive U$, respectively). 
Based on \eqref{eq:introrule2}, 
we may state a corresponding \emph{backward} reduction rule,
which reverts the intra-session synchronization at the level of processes, types, and message queues:
\begin{equation}
P \parallel 
\bar{s}\queue{T_1.\send U \past S_1 \,\cdot\, \tilde{h}_1} 
\parallel
Q
\parallel
s\queue{T_2.\receive U \past S_2 \,\cdot\, \tilde{h}_2,v} 
\bk
\out{s}\msg{v}.P \parallel 
\bar{s}\queue{T_1 \past \send U.S_1 \,\cdot\, \tilde{h}_1} 
\parallel
s(x).Q
\parallel
s\queue{T_2 \past\receive U.S_2 \,\cdot\, \tilde{h}_2} 
\label{eq:introrule3}
\end{equation}
Our main technical contribution is a core framework for session communication and reversibility whose monitored semantics
follows the spirit  of  rules \eqref{eq:introrule2} and \eqref{eq:introrule3}.
In our framework, session processes occur within \emph{configurations}, which add monitors and state for endpoints: while state conveniently implements substitutions, monitors handle both communication and reversibility.
Reduction is defined for configurations following rules \eqref{eq:introrule2} and \eqref{eq:introrule3}.
We support session establishment and  the consistent use of sent values and open variables in the state (cf. $v$ and $x$ in \eqref{eq:introrule2} and \eqref{eq:introrule3}). Our semantics enjoys the so-called ``loop lemma'', which offers a basic consistency guarantee for the interplay of forward and backward actions. 

In our opinion, the use of monitors with type-checking for reversible semantics is an observation that has 
at least two significant implications. 
First, it is encouraging to discover that monitor-based semantics with type-checking---introduced 
in~\cite{Kouzapas09,DBLP:conf/ecoop/HuKPYH10,DBLP:conf/forte/KouzapasYH11}
for asynchronous communications with events and used in~\cite{digiusto:hal-01093090,DBLP:journals/soca/CoppoDV15} to define run-time adaptation---may also inform the semantics of reversible protocols. 
Monitors have also been used for security purposes~\cite{DBLP:journals/corr/abs-1108-4465,DBLP:journals/corr/CastellaniDP14} and, quite recently, for assigning blame to deviant session processes~\cite{LGP2016}.
Therefore, a monitor-based semantics encompasses an array of seemingly distinct concerns in structured communications.
Second, we see our developments as a first step towards validation techniques for communication and reversibility based on run-time verification.
Session frameworks with run-time verification have been developed in, e.g.,~\cite{DBLP:conf/forte/BocchiCDHY13,DBLP:journals/fmsd/DemangeonHHNY15}.
As these works do not support reversibility, our work may suggest enhancements for their dynamic verification capabilities.



\section{Syntax and Semantics}
\begin{figure}[t]
\begin{align*}
k, k' \sdef \quad& s, \dual{s} \sepr x, y \quad\quad\quad u, u' \sdef \quad a,b \sepr x,y  \quad\quad\quad n, m \sdef \quad a,b \sepr s,\dual{s}\\
M,N \sdef \quad& \nil \sepr  
\conf{\envs}{P}{\store}
\sepr 
\moni{a}{H}{\set{x}}{\widetilde{u}}
\sepr   \fresh{n} M \sepr M \parallel N \\ 
P,Q \sdef \quad& u(x:S).P \sepr \dual{u}(x:S).P \sepr\out{k}\msg{v}.P \sepr k(x).P \sepr 
\fresh{a} P \sepr\nil \\
S,T \sdef \quad&\keyword{end} \sepr \send U.S \sepr\receive U.S \\
H,K \sdef \quad & \past S \sepr S\past \sepr S \past T
\end{align*}
\caption{Syntax of Configurations, Processes, and Session Types.}
\label{fig:syn}
\end{figure}

In this section we present our framework of session processes with monitored semantics and reversibility. 
We assume the following denumerable infinite mutually disjoint sets:
the set $\sessions$ of \textit{session names} (or \emph{endpoints}), the set $\channels$ of \textit{channels} and the set of \textit{variables} $\vars$. 
The set $\names = \sessions \cup \channels$ is called the set of \textit{names}. We assume a total bijection over $\names$, noted $\dual{\cdot}$, relating names with their duals such that $\dual{n} \neq n$ and $\dual{\dual{n}} = n$, for any name~$n$. We let $a,b$ to range over $\channels$; $s,r$ (and their duals) to range over $\sessions$; $m,n$ to range over $\names$ and $x,y$ to range over $\vars$. 
We use $\widetilde{o}$ to denote a finite sequence of objects (names, sessions, variables) $o_1, o_2, \ldots, o_n$, which we sometimes treat as a set or as an ordered list.  
We write \envs, $\envs'$ to range over finite, possibly empty sequences of  session names.

\paragraph{Syntax.}
The syntax of 
configurations $M, N$, processes $P, Q$, and session types $S, T$
is given in Fig.~\ref{fig:syn}. 
The syntax of  $M$ includes the empty configuration $\nil$, 
the \textit{running} process 
$\conf{\envs}{P}{\store}$, 
a monitor 
$\moni{s}{S}{\set{x}}{\set{u}}$,
the name restriction $\fresh{n}M$, and parallel composition $M\parallel N$.
A running process $\conf{\envs}{P}{\store}$ is {univocally} identified by $\envs$, 
 the sequence of session endpoints occurring  in $P$. The local store $\store$  is a list of pairs of the form $\{x,\set{v}\}$ (see Def. \ref{d:store}). A monitor  
$\moni{s}{S}{\set{x}}{\set{u}}$
is identified by the session name $s$, 
contains its session type $S$ (see below), a list $\set{x}$ containing all the variables used by the process, and a list of names $\set{u}$ that the process has used in the session. These two lists will be useful to rebuild prefixes.  
A monitor type $T$ describes the behavior of its associated session. 
The syntax of types assumes a set of basic \textit{sorts} ($\basic{bool}$, $\basic{int}$, $\ldots$), ranged over $U$. 
We also assume  $\prim$ as the set of all possible values 
belonging to basic sorts; this way, $\val = \names \cup \prim$ is the set of values that processes can exchange.
We use $v,w$ (and their decorated versions) to range over $\val$.
The type $\send U.S$ (resp. $\receive U.S$) indicates that the owner of the monitor may send (resp. receive) a value of type $T$ and proceed with the behavior prescribed by $S$. 

Types $\send U.S$ and $\receive U.S$ are standard in session types disciplines. A novelty in our work is the (run-time) type $S_1\past S_2$:
it indicates that  $S_1$ is the past (already executed) behavior of the associated session,  while $S_2$ represents the 
present behavior (yet to be executed). 
That is, the separator $\past$ is used as a cursor in a type; it is 
inspired by the separator used in~\cite{CardelliL11} to remember the past of sequential CCS processes.
These session types with \emph{present and past} occur only at run-time; the
intent is that each time that the process performs a forward computation the cursor will be moved forward by one action; it will be moved backwards by one action as result of a reversible action. 

The syntax of processes follows standard lines:
we consider 
the idle process $\nil$, 
prefixes for session establishment (noted $u(x:S).P$ and $\dual{u}(x:S).Q$, where $S$ is a session type), and 
prefixes for intra-session communication (noted $k(x).P$ and $k\msg{v}.P$).
We write $\procs$ and $\confs$ to indicate the set of processes and configurations, resp. We call \textit{agent} an element of the set $\agents = \confs \cup \procs$. We let $P,Q$ (and their decorated versions)  to range over $\procs$; also, we use $L,M,N$ to range over $\confs$ and $A, B, C$ to range over $\agents$. 

Before formally presenting the operational semantics, 
we give some intuitions on the information carried by monitors, in particular the variable lists.
Consider the following configuration, with $s\in \envs$:
$$\conf{\envs}{P}{\store} \parallel \moni{s}{S.\receive U\past T}{\set{x},x}{\set{u},k}$$ 
By inspecting the type before the cursor $\past$, 
we know the last action of the process, before becoming $P$, was an input; 
also, by the additional information in the variable and name lists, we know that this input action was of the form $k(x)$.
That is, the shape of the process right before the input action was $k(x).P$. 

\paragraph{Operational Semantics.}
The operational semantics of our reversible calculus is defined via a reduction relation $\red$, which is a binary relation over
configurations $\red \subset \confs \times \confs$, and a structural congruence relation $\equiv$, which is a binary relation over processes and configurations $\equiv \subset \procs^{2}\cup \agents^{2}$.

\begin{figure}
{\small
\begin{mathpar}
	\inferrule*[left=(E.ParC)]{}
	{A \parallel B \equiv B \parallel A}
	\and
	\inferrule*[left=(E.ParA)]{}
	{A \parallel (B \parallel C) \equiv (A \parallel B) \parallel C}
	\and
	\inferrule*[left=(E.NilM)]{}
	{A \parallel \nil \equiv A}
	\and
	\inferrule*[left=(E.NewN)]{}
	{\fresh{n}{\nil} \equiv \nil }
	\and
	\inferrule*[left=(E.NewC)]{}
	{\fresh{n}{\fresh{m}{A}} \equiv \fresh{m}{\fresh{n}{A}}}	
	\and
	\inferrule*[left=(E.NewP)]{}
	{(\fresh{n}{A}) \parallel B \equiv \fresh{n}{(A \parallel B)}}
	\and
	\inferrule*[left=(E.$\alpha$)]{}
	{A =_{\alpha} B \implies A \equiv B}
\end{mathpar}
}
\vspace{-6mm}
\caption{Structural congruence}
\label{fig:str}
\end{figure}
\begin{definition}[Contexts]
\textit{Configuration contexts}, also called evaluation contexts, are configurations with one hole ``$\hole$'' defined by the following grammar: $\ctx{E} \sdef \hole \sepr( M \parallel \ctx{E}) \sepr\fresh{n}\ctx{E}$.
General contexts $\ctx{C}$ are processes or configurations with one hole $\hole$'', and are obtained from processes or configurations by replacing one occurrence of  $\nil$ (either as process or as configuration) with $\hole$.
\end{definition}
A congruence on processes and configurations is an equivalence relation $\rel$ that is closed under general contexts: $P \,\rel\,  Q \Longrightarrow \ctx{C}[P]\,\rel\,  \ctx{C}[Q] $
and $M\rel N \Longrightarrow \ctx{C}[M]\,\rel\,  \ctx{C}[N]$.
The relation $\equiv$ is defined as the smallest congruence, on processes and configurations, that satisfies rules in Figure~\ref{fig:str}. In defining the rules  
we adopt Barendregt's Variable Convention: If terms $t_1, \ldots, t_n$ occur in a certain context (e.g. definition, proof), then in these terms all bound identifiers and variables are chosen to be different from the free ones. This is why in Rule $(\textsc{E.NewP})$ there is no check on free names.

A binary relation $\rel$ on closed configurations is \textit{evaluation-closed} if it satisfies the inference rules:
\begin{mathpar}
\inferrule*[left=(Ctx)]{M\,\rel\, N}{\ctx{E}[M] \,\rel\,  \ctx{E}[N]} \and  \inferrule*[left=(Eqv)]{M\equiv M' \and M'\,\rel\,  N' \and N'\equiv N}{M \,\rel\,  N} 
\end{mathpar}
The reduction relation $\red$ is defined as the union of two relations, the \textit{forward} reduction relation $\fw$ and the 
backward reduction relation $\bk$: $\red = \fw \cup \bk$. Relations $\fw$ and $\bk$ are the smallest evaluation-closed relations satisfying the rules in Figure~\ref{fig:sem}.
Before commenting them we need some definitions in place:
\begin{definition}[Dual type]
The dual type of a type $S$, indicated as $\dual{S}$, is inductively defined as follows:
$$\dual{\send{U}.S} = \receive{U}.\dual{S} \quad \quad \dual{\receive{U}.S} = \send{U}.\dual{S} \quad \quad \dual{\keyword{end}} = \keyword{end}$$
\end{definition}
Sometimes we will write $\mathsf{dual}(S_1,S_2)$ to indicate $S_1 = \dual{S_2}$.

\begin{remark}[Store and Explicit Substitution]
One of the main challenges in defining reversible 
semantics for processes is how to treat substitutions, since in general a substitution is not a bijective function. 
There are at least two possibilities. First, one may create a copy of a process before applying a substitution and then replace the process with its copy when reverting the substitution; this is the strategy developed in~\cite{LaneseMS10}.
Second, one may 
use a store and a mechanism with explicit  substitution, following~\cite{LienhardtLMS12,DBLP:conf/lics/CristescuKV13}. The first technique creates a memory each time a value is substituted; 
here we implement the second technique, in which one just has to remember the pair variable/value for each substitution. 
\end{remark}
We now formally define the store $\sigma$ and the operations on it. 

\begin{definition}
\label{d:store}
A local store $\store$ is a mapping from variables to an ordered list of values. 
Given a store $\store$, a variable $x$, and a value $v$, we define 
the \textit{update} $\store[x \mapsto v]$ and \textit{reverse update} $\store \rup x$ as follows:
\begin{align*}
	 \store[x \mapsto v] =  \left\{ \begin{array}{rl} \store \cup \{x, v\} & \text{ if } x \not \in dom(\store)  \\
	 \store_1 \cup \{x,\widetilde{v}\cdot v\}  & \text{ if } \store = \store_1 \cup \{x, \widetilde{v} \} 	\end{array} \right .
	 & \qquad 
	 \store \rup x =  \left\{ \begin{array}{rl} \store_1  & \text{ if } \store= \store_1 \cup \{x, v \}  \\
	 \store_1 \cup \{x, \widetilde{v}\}  & \text{ if } \store = \store_1 \cup \{x, \widetilde{v}\cdot v\} 	\end{array} \right .
 \end{align*}
The \textit{evaluation} of name $n$ under a local store $\sigma$, written  $\myeval{n}{\sigma}$, is the value $v$ if $\{n,v\}\in \store$ or $\{n,\widetilde{v}\cdot v\} \in \store$; otherwise, it is $n$ itself. 
 Let us stress the fact that our store maintains a correspondence between variables and \textit{lists} of values in order to enable reversibility. The list represents at a given time the assignment history of the variable to which it corresponds, with the actual value of the variable being at the top of the list.
 If $\myeval{n}{\store} = n$ then $n$ is not a variable.
%
\end{definition}

\begin{figure}[t!]
\begin{mathpar}
\inferrule[(Open)]{
\mathsf{dual}(S,T) \and 
\dual{s} \not\in\envs 
\and s \not\in\envs'  \and \myeval{u}{\sigma_1} =  {\myeval{u'}{\sigma_2}
}
}
{
\conf{\envs}{\dual{u}(x:S).P}{\store_1}
\,\parallel\, 
\conf{\envs'}{u'(y:T).Q}{\store_2}
\\
\fw 
\\ (\new s,\dual{s}).\,\big(
\conf{\envs,\,\dual{s}}{P}{\store_1[x\mapsto \dual{s}]}  
\,\parallel\,
\moni{\dual{s}}{\past  S}{x}{\dual{u}}  
\,\parallel\, 
\conf{\envs',\,s}{Q}{\store_2[y\mapsto s]} 
 \,\parallel\, 
 \moni{s}{\past   T}{y}{u'} \big)
 } 
\and
\inferrule[(Open$\un$)]
{\mathsf{dual}(S,T)  \and \myeval{u}{\sigma_1} = {\myeval{u'}{\sigma_2}
} 
}
{
(\new s,\dual{s}).\, 
\big(
\conf{\envs,\,\dual{s}}{P}{\store_1}  
\,\parallel\, 
\moni{\dual{s}}{\past  S}{ x}{\dual{u}}  
\,\parallel\, 
\conf{\envs',\, s}{Q}{\store_2}
 \,\parallel\, 
\moni{s}{\past   T}{y}{u'}  
 \big)
\\
\bk
\\
\conf{\envs}{\dual{u}(x:S).P}{\store_1 \rup x} 
\,\parallel\, 
\conf{\envs'}{u'(y:T).Q}{\store_2 \rup y}
}
 \and
\inferrule[(Com)]{
\myeval{k}{\sigma_1}  = s \and s\in \envs \and \myeval{\dual{k}'}{\sigma_2} = \dual{s} \and  \dual{s} \in \envs' \and \myeval{y}{\store_2} = v }
{
\conf{\envs}{k(x).P}{\store_1}
\,\parallel\, 
\conf{\envs'}{\dual{k}'\msg{y}.Q}{\store_2}
\,\parallel\, 
\moni{s}{T.\past \receive U.S_1}{\set{x}}{\set{n}}  
\,\parallel\, 
\moni{\dual{s}}{T'.\past \send U.S_2}{\set{y}}{\set{m}} 
\\ \fw  \\ 
\conf{\envs}{P}{\store_1[x\mapsto  v]}  
\,\parallel\, 
\conf{\envs'}{Q}{\store_2}
\,\parallel\, 
\moni{s}{T. \receive U.\past S_1}{\set{x},x}{\set{n},k}  
\,\parallel\, 
\moni{\dual{s}}{T'. \send U.\past S_2}{\set{y},y}{\set{m},k'}
}
 \and
\inferrule[(Com$\un$)]
{\myeval{k}{\sigma_1}  = s \and s\in \envs \and \myeval{\dual{k}'}{\sigma_2} = \dual{s} \and  \dual{s} \in \envs' 
} 
{
\conf{\envs}{P}{\store_1}
\,\parallel\, 
\conf{\envs'}{Q}{\store_2}
\,\parallel\, 
\moni{s}{T. \receive U.\past S_1}{\set{x}, x}{\set{n}, k}  
\,\parallel\, 
\moni{\dual{s}}{T'. \send U.\past S_2}{\set{y}, y}{\set{m}, k' } 
\\
\bk  \\
\conf{\envs}{k(x).P}{\store_1 \rup x}
\,\parallel\, 
\conf{\envs'}{\dual{k}\msg{y}.Q}{\store_2}
\,\parallel\, 
\moni{s}{T.\past  \receive U.S_1}{\set{x}}{\set{n}}  
\,\parallel\, 
\moni{\dual{s}}{T'. \past \send U. S_2}{\set{y}}{\set{m}}  
} 
\end{mathpar}
\caption{Operational Semantics: Forward ($\fw$) and Backward ($\bk$) Reduction Semantics.}
\label{fig:sem}
\end{figure}

The rules of the operational semantics are in Fig. \ref{fig:sem}. We briefly comment on them:
\begin{enumerate}[$\bullet$]
\item Rule \textsc{Open} is the forward rule for session establishment. It creates two fresh, dual endpoints and their associated monitors. 
Each monitor stores a session type; each store records the name of the fresh session endpoints and of the channel. The monitor records the name on which the session has started and the variable used by the process to refer to the endpoint. Establishing a new session requires type duality and that the two process refer to the same name (cf. condition $\myeval{u}{\sigma_1} = \myeval{u'}{\sigma_2}$).

\item Rule \textsc{Open}$\un$ is the exact opposite of Rule $\textsc{Open}$. In order to revert a session creation, the rule checks that session types are at their initial position (e.g., $\past S$ and $\past T$). Moreover, the variable and name lists should contain just one element each. The effect of the rule is to collect the two endpoint names and to eliminate the two associated monitors, while restoring the prefixes in the processes.

\item Rule \textsc{Com} describes intra-session communication. Two running processes can communicate if they refer to the same session. The sent value $v$ is obtained by evaluating the sent value $y$ under the sender store $\store_2$. 
The result of a communication is that the store of the receiver is updated with a new value, bound to the read variable: $\store_1[x\mapsto v]$. Also, both monitor types are moved one step forward, and both read and sent variables are put on the top of the list in their respective monitors; the same occurs for the session names. This way we keep information about the  {prefixes} (i.e., $k(x)$ and $\dual{k}'\msg{y}$).

\item Rule \textsc{Com}$\un$ undoes a communication: the sent value (along with the variable used to read it) is eliminated from the store of the receiving process. 
The variable list of the monitor keeps information on which variable to unbound: indeed, it is the variable at the top of this list the one that has to be eliminated, as we want to revert  the input of its associated value. Moreover, information contained in the name list of the monitors is used to recover output and input prefixes. Notice that the information about the kind of prefix to be built again (input or output) is given by the type of the monitor. A further consequence of undoing a communication is that the session types are moved one step backward.
\end{enumerate}

\noindent
Our process framework satisfies the so-called \emph{loop lemma}, a property that gives us a basic guarantee of the consistency between 
forward and backward reductions. We require the following definition:

\begin{definition}[Initial and Reachable Configurations]
A configuration is \textit{initial} if there are no monitors and all the running processes have an empty store and are identified by $\emptyset$.
A configuration is \textit{reachable} if it can be derived using $\red$ from an initial configuration.
\end{definition}
An easy induction on the structure of terms provides us
with a kind of normal form for configurations:
\begin{lemma}[Normal Form]\label{lm:nf}
For any configuration $M$ we have that: 
$$M\equiv \new{\widetilde a}. \left( \; \prod_{i\in I} \conf{\envs_{\, i}}{P_i}{\sigma_i} \parallel \prod_{j\in J} \moni{s_j}{H_j}{\widetilde y_j}{\widetilde u_j} \right)
 $$
\end{lemma}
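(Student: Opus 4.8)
The plan is to proceed by structural induction on the configuration $M$, using the rules of Figure~\ref{fig:str} to float every name restriction to the outermost level and to flatten the residual parallel composition into a product of running processes next to a product of monitors. I read the prefix $\new{\widetilde a}$ as an iterated single-name restriction $\fresh{a_1}\cdots\fresh{a_k}$, and I adopt the convention that an empty product (the case $I=\emptyset$ or $J=\emptyset$) denotes $\nil$; this is harmless since $A\parallel\nil\equiv A$ by (E.NilM). Note also that, because $\parallel$ is associative and commutative under $\equiv$ (rules (E.ParA) and (E.ParC)), the products $\prod_{i\in I}$ and $\prod_{j\in J}$ are well defined up to $\equiv$, irrespective of the chosen enumeration of $I$ and $J$.

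For the base cases: if $M=\nil$ take $\widetilde a$, $I$ and $J$ all empty; if $M=\conf{\envs}{P}{\sigma}$ take $\widetilde a$ and $J$ empty and $I$ a singleton; if $M=\moni{a}{H}{\widetilde y}{\widetilde u}$ take $\widetilde a$ and $I$ empty and $J$ a singleton. In each case $M$ is already literally of the claimed shape, modulo the empty-product convention.

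For the inductive step, if $M=\fresh{n}M'$ the induction hypothesis yields $M'\equiv\new{\widetilde a}.\big(\prod_{i\in I}\conf{\envs_{\,i}}{P_i}{\sigma_i}\parallel\prod_{j\in J}\moni{s_j}{H_j}{\widetilde y_j}{\widetilde u_j}\big)$, so $M\equiv\fresh{n}\new{\widetilde a}.(\cdots)$ is of the required form with restriction prefix $n,\widetilde a$. If $M=M_1\parallel M_2$, apply the induction hypothesis to each $M_\ell$ ($\ell\in\{1,2\}$) to obtain $M_\ell\equiv\new{\widetilde a_\ell}.\big(\prod_{i\in I_\ell}\conf{\envs_{\,i}}{P_i}{\sigma_i}\parallel\prod_{j\in J_\ell}\moni{s_j}{H_j}{\widetilde y_j}{\widetilde u_j}\big)$. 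By $\alpha$-conversion (rule (E.$\alpha$)) and Barendregt's convention --- under which scope extrusion (E.NewP) carries no side condition on free names --- I may assume that $\widetilde a_1$ and $\widetilde a_2$ are disjoint from each other and from the free names of the other component. Repeated use of (E.NewP), aided by (E.ParC), then pulls both prefixes to the front, and (E.ParA) and (E.ParC) reassociate and commute the resulting four nested products into one product of running processes, indexed by the disjoint union $I_1\uplus I_2$, in parallel with one product of monitors indexed by $J_1\uplus J_2$. This closes the induction.

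I do not anticipate a genuine obstacle: the statement is a routine prenex-normal-form computation for a $\pi$-calculus-style structural congruence. The only places that call for a little care are the treatment of $\nil$ and of empty products via (E.NilM) and (E.NewN), and the freshness conditions needed for scope extrusion, which are exactly what the ambient Barendregt convention (already adopted for $\equiv$ in this paper) is designed to discharge.
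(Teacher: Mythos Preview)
Your proposal is correct and follows precisely the approach the paper indicates: the paper states only that ``an easy induction on the structure of terms'' yields the normal form, and your argument is exactly that structural induction spelled out in full, with the same empty-product convention the paper adopts. There is nothing to add.
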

Notice that, by convention, we assume that   $\prod_{i\in I} A_i = \nil$ if  $I = \emptyset$.
Then we have:
\begin{lemma}[Loop Lemma]
For any reachable configuration $M,N$, we have $M \fw N$ $\iff$ $N\bk M$.
\end{lemma}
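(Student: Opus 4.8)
The plan is to proceed by induction on the derivation of the reduction step, using the fact that $\fw$ and $\bk$ are the smallest evaluation-closed relations satisfying the four rules of Figure~\ref{fig:sem}. The statement is an ``$\iff$'', but the two implications are proved by parallel case analyses, because the forward rules \textsc{Open}, \textsc{Com} and the backward rules \textsc{Open}$\un$, \textsc{Com}$\un$ come in matching pairs: each case of one direction mirrors a case of the other. For the inductive cases, closure under contexts and structural congruence transfers the result immediately: if the step was derived by (Ctx) from $M' \fw N'$ with $M = \ctx{E}[M']$, $N = \ctx{E}[N']$, then the induction hypothesis gives $N' \bk M'$, hence $\ctx{E}[N'] \bk \ctx{E}[M']$ by (Ctx) again; the (Eqv) case follows from the induction hypothesis together with symmetry and transitivity of $\equiv$. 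Lemma~\ref{lm:nf} is convenient here to bring configurations into a shape where the relevant running processes and monitors, together with any $(\new s,\dual s)$ binders introduced by \textsc{Open}, sit at top level, so that a base rule can be applied up to $\equiv$.

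For the base cases, the point is that each forward rule, applied to a (sub)configuration, is exactly undone by the corresponding backward rule, and vice versa. Consider \textsc{Com}: a forward step moves each cursor past one $\receive U$/$\send U$ action, pushes the read/sent variable onto the monitor's variable list and the session name onto its name list, and updates the receiver store via $\sigma_1[x \mapsto v]$ with $v = \myeval{y}{\sigma_2}$. The resulting configuration matches precisely the left-hand side of \textsc{Com}$\un$---the cursor sits immediately after an input/output, the relevant variable/name is on top of each list, and the conditions on $\myeval{k}{\sigma_1}$, $\myeval{\dual{k}'}{\sigma_2}$ still hold. Applying \textsc{Com}$\un$ pops the lists (recovering $\set{x}$, $\set{n}$ and the prefixes $k(x).P$, $\dual{k}'\msg{y}.Q$ from the monitor type and the logged entries) and performs $\sigma_1[x \mapsto v] \rup x$; by Definition~\ref{d:store}, reverse update is a left inverse of update, so $\sigma_1[x \mapsto v] \rup x = \sigma_1$ and we recover $M$. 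The converse pairing (\textsc{Com}$\un$ then \textsc{Com}) is analogous, but now we must know that the value popped from the top of $x$'s list in $\sigma_1$ is exactly $\myeval{y}{\sigma_2}$, so that the ensuing forward step reinstates it. This holds by an invariant of reachable configurations: a monitor type of the form $T.\receive U.\past S_1$ records a past input whose received value currently sits on top of the corresponding store entry. The \textsc{Open}/\textsc{Open}$\un$ cases are similar and simpler: \textsc{Open} creates a $(\new s,\dual s)$ binder and two monitors with cursor at the initial position $\past S$, $\past T$ and singleton variable/name lists, which is exactly the shape demanded by \textsc{Open}$\un$; reverting needs $\sigma_1[x \mapsto \dual s] \rup x = \sigma_1$, which holds because $x$ is fresh for $\sigma_1$ (Barendregt's convention and well-formedness of reachable configurations give $x \notin dom(\sigma_1)$).

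The main obstacle is the backward-to-forward implication, in particular the \textsc{Com}$\un \Rightarrow \textsc{Com}$ case. In the forward-to-backward direction the forward step itself manufactures a configuration that exactly matches the premises of the undoing rule, so nothing needs to be argued about consistency. In the other direction we start from an \emph{arbitrary} reachable configuration on which a backward rule fires, and we must show that the data destroyed by that step---the top value of a store entry, the shape of a reconstructed prefix, the freshness of a bound name---is consistent enough for the matching forward rule to rebuild precisely the original term. This forces one to isolate and prove a small suite of structural invariants of reachable configurations (store entries agree with the values logged in monitor types; variable and name lists faithfully encode the executed prefixes; bound variables are fresh for their stores), which the reachability hypothesis supplies but which must be made explicit for the argument to close.
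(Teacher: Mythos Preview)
Your proposal is correct and follows essentially the same approach as the paper: induction on the derivation of the step, using Lemma~\ref{lm:nf} to expose the relevant running processes and monitors, and matching each forward rule with its backward counterpart (the paper's sketch only spells out the \textsc{Open} case of the forward direction and declares the rest ``similar''). Your treatment is in fact more careful than the paper's own sketch---in particular, your observation that the converse direction (e.g.\ \textsc{Com}$\un \Rightarrow$ \textsc{Com}) relies on invariants of reachable configurations (the popped store value coincides with $\myeval{y}{\sigma_2}$, variable/name lists faithfully record past prefixes, bound variables are fresh for their stores) is exactly the point where the reachability hypothesis is genuinely needed, and the paper leaves this implicit.
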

\begin{proof}[Proof (Sketch)]
By induction on the derivation of $M\fw N$ for the if direction, and on the derivation of $N\bk M$ for the converse.
We will just show the forward case when the applied rule is $\textsc{Open}$; the other cases are similar. 
By Lemma~\ref{lm:nf} we have that: 
$$M\equiv \new{\widetilde a}. \left( \; \prod_{i\in I} \conf{\envs_{\, i}}{P_i}{\sigma_i} \parallel \prod_{j\in J} \moni{s_j}{H_j}{\widetilde y_j}{\widetilde u_j} \right)
 $$
and since rule $\textsc{Open}$ is applied, then there exist two indexes $w,z\in I$ such that 
$P_{w} = \dual{u}(x:S).Q_{w}$ and 
 $P_{z} =u'(y:T).Q_z$ with 
 $\mathsf{dual}(S,T)$, $\dual{s} \not\in\envs_w$, $s \not\in\envs_z$  and  
 $\myeval{u}{\sigma_1} =  \myeval{u'}{\sigma_2}$.  We have then:
\begin{align*}
M \fw  \new{\widetilde a, s,\dual{s}}.& \big( \; \prod_{i\in I'} \conf{\envs_{\, i}}{P_i}{\sigma_i} \parallel \prod_{j\in J} \moni{s_j}{H_j}{\widetilde y_j}{\widetilde u_j}  \parallel \conf{\envs_w, \, \dual s}{P_w}{\store_w[x\mapsto \dual s ]}  
 \parallel \conf{\envs_z,\,s}{P_z}{\store_z[y\mapsto s ]} \\
 & \quad \parallel  \moni{s_w}{\past T}{x}{\dual u} \parallel \moni{s_z}{\past S}{y}{u} \big)
 \end{align*}
 with $I' = I\setminus \{w,z\}$. 
 It is easy to see that by applying \textsc{Open}$\un$ we get back to $M$, as desired.
\end{proof}

\noindent Another property that a reversible calculus should enjoy is the so-called \textit{square lemma}, which may be informally described as follows. 
Assume a configuration from which two reductions are possible: if these reductions are \textit{concurrent} then the order in which the two reductions are applied does not matter, and the same configuration is reached. This lemma therefore relies on the definition of concurrent transitions. In our setting, thanks to the information carried by the monitors and to linearity of sessions,  we may decree that two reductions are concurrent if they operate on different sessions/channels (service names). 
One may instrument the reduction semantics  $\red$ with a label $\lambda$ containing the endpoints used by the rule and the service name (if any):
$M\xrightarrow{\lambda} N$. 
Then,  reductions $M\xrightarrow{\lambda_1} N$  and $M\xrightarrow{\lambda_2} N$ are  concurrent if $\lambda_1 \cap \lambda_2  = \emptyset$.

Using the square lemma one may then show that the reversible semantics is \emph{causally consistent}, i.e., that 
an action can be reverted
only after all the actions causally depending on it have already been reverted.
We leave for future work establishing these further results for our framework.

\section{Concluding Remarks and Future Work}
We have proposed a fresh approach to reversible semantics for session processes: it builds upon a style of process semantics in which monitors (which include session types) enable process reductions. Even if this style of process semantics is not new---it was introduced in~\cite{Kouzapas09} and later used in~\cite{DBLP:conf/ecoop/HuKPYH10,DBLP:conf/forte/KouzapasYH11,digiusto:hal-01093090,DBLP:journals/soca/CoppoDV15}---to our knowledge this is the first time that this formulation is used to support reversibility. 

We rely on monitors which contain types that describe past and future structured interactions; these types offer a natural 
form of memories for supporting forward and backward semantics. We motivated our approach by introducing a simple process framework with session establishment and communication; extensions with other usual session constructs (labeled choice and recursion) are straightforward. 
To highlight the simplicity of our approach, we have considered binary session types. We believe that our approach scales to account for multiparty structured communications; in such a setting, monitors would be generated after multiparty session establishment, and would be equipped with local projections of global types, as in~\cite{DBLP:conf/forte/BocchiCDHY13,DBLP:journals/corr/CastellaniDP14}. 
A multiparty, asynchronous semantics may need to consider forms of \emph{coordinated} reversibility among different partners; 
we plan to address these challenges in future work.

Most models of reversible  processes (cf.~\cite{DanosK04}) do not consider (behavioral) types, and so their reversible semantics must account for arbitrary  forms of concurrent behavior. 
In reversing the untyped $\pi$-calculus, substitutions and scope extrusion are known to be challenging issues~\cite{DBLP:conf/lics/CristescuKV13}.
Reversing session processes is a seemingly simpler problem, as behavior is disciplined by types: once a session is established, concurrency interactions proceed in a deterministic, confluent manner. Also, in session $\pi$-calculi scope extrusion is limited. 
To our knowledge, the work~\cite{DBLP:journals/jlp/TiezziY15} is the first to address reversibility for 
a synchronous $\pi$-calculus with binary session types.
A key difference between our work and~\cite{DBLP:journals/jlp/TiezziY15} is the role that session types play in the reversible semantics. 
We have used session types to define forward and backward semantics for session processes; in contrast, the reversible semantics in~\cite{DBLP:journals/jlp/TiezziY15} establishes key results for reversibility (e.g., the square lemma and causal consistency) using an untyped reduction semantics. 
Hence, although the reversible session $\pi$-calculus in~\cite{DBLP:journals/jlp/TiezziY15} is shown to be typable using standard binary session types, the influence of types on the reversible semantics of~\cite{DBLP:journals/jlp/TiezziY15} is indirect  at best.  

As further topics for future work, we plan to establish the precise savings involved from moving from (i)~an untyped reversible semantics to (ii)~a monitored reversible semantics with types, as proposed here. 
We plan to compare the (untyped) reversible higher-order processes in~\cite{LaneseMS10}
and the core higher-order session calculus in~\cite{KPY2016}, which may precisely encode the first-order session $\pi$-calculus.
Another direction  concerns \emph{controlled reversibility}~\cite{LaneseMSS11}.
Some recent approaches have proposed types with controlled roll-back and explicit checkpoints among parties~\cite{BDLL15}. We believe that by adding explicit rollback into (run-time) type information, we could achieve intuitive mechanisms for controlled reversibility. 
\paragraph{Acknowledgments.} We are grateful to Dimitris Kouzapas for useful exchanges.
We would also like to thank the anonymous reviewers for their suggestions, which were helpful to improve the presentation.
P\'erez is also affiliated to the NOVA Laboratory for Computer Science and Informatics (NOVA LINCS - PEst/UID/CEC/04516/2013), Universidade Nova de Lisboa, Portugal.
\bibliographystyle{eptcs}
\bibliography{rsessions}
\end{document}